\theoremstyle{plain}
\newtheorem{theorem}{Theorem}
\theoremstyle{definition}
\newtheorem{definition}{Definition}
\newtheorem{lemma}{Lemma}
\begin{document}
 
\title{Most memory efficient distributed super points detection on core networks}

\author{Jie Xu\inst{1}  \and
Wei Ding\inst{2}  \and
Xiaoyan Hu\inst{2} }
\authorrunning{J. Xu et al.}
 
\institute{School of Computer Science and Engineering, Southeast University, Nanjing, China \email{xujieip@163.com} \and
School of Cyber Science and Engineering, Southeast University, Nanjing, China \email{wding@njnet.edu.cn}}

\maketitle

\begin{abstract}
The super point, a host which communicates with lots of others, is a kind of special hosts gotten great focus. Mining super point at the edge of a network is the foundation of many network research fields. In this paper, we proposed the most memory efficient super points detection scheme. This scheme contains a super points reconstruction algorithm called short estimator and a super points filter algorithm called long estimator. Short estimator gives a super points candidate list using thousands of bytes memory and long estimator improves the accuracy of detection result using millions of bytes memory. Combining short estimator and long estimator, our scheme acquires the highest accuracy using the smallest memory than other algorithms. There is no data conflict and floating operation in our scheme. This ensures that our scheme is suitable for parallel running and we deploy our scheme on a common GPU to accelerate processing speed. We also describe how to extend our algorithm to sliding time. Experiments on several real-world core network traffics show that our algorithm acquires the highest accuracy with only consuming littler than one-fifth memory of other algorithms.
\keywords{super points detection, distributed computing, GPU computing, network measurement}
\end{abstract}

\section{Introduction}
With the developing of the network, thousands of Gigabytes data pass through the Internet every second \cite{cisco:NetForcast}. It is too expensive to monitor every host in the network. An efficient way is to focus on special ones which have great influence on the network security and management. The super point, a host which communicates with lots of others, is one of such special hosts playing important roles in the network, such as Web servers\cite{ICPADS2011:AFailureDetectionServiceForInternetBasedMultiASDistributedSystems}\cite{IMC:BrowserFeatureUsageModernWeb}, P2P spreaders\cite{HSD:identifyHighCardinalityHosts}\cite{P2P:SoKP2PWNED}, DDoS victims\cite{DosC:AttackProtectionCloudComputingSoftwareDefinedNetworking}\cite{DosC:AgainstDataCenterWithCorrelationAnalysis}, scanners\cite{Scan:EvasionResistantNetworkScanDetection}\cite{scan:surveyPortScansAndDetection} and so on. Detecting super point can help us with network management and security. It is also a foundation module of many instruction detection system\cite{Secure:SnortLightweightIntrusionDetectionNetworks}.

For example, DDoS (Distributed Denial of Service) attack is a heavy threat to the Internet\cite{ACSAC:2014:DoS:CPSDrivingCyberphysicalSystems}\cite{IMC:MeasuringtheAdoptionofDDoSProtectionServices}. It appears at the beginning of the Internet and becomes complex with the rapid growth of the network technology.  Although many defense algorithms have been proposed, most of them are too elaborate to deploy in the high-speed network. The peculiarity of a victim under DDoS attack is that it will receive huge packets with different source IP addresses in a short period. A DDoS victim is a typical super point.  Super point only accounts for a small fraction of the overall hosts. If we detect super points first and spend more monitoring resource to them, we can defense DDoS much more efficiently.  Real-time super points detection on the core network is an important step of these applications. 

The speed of nowadays network is growing rapidly. For a core network, it always contains several border routers which locate at different places. How to detect overall super points from all of these distributed routers is more difficult than from a small single router. A distributed super points detection algorithm should satisfy following criteria:
\begin{enumerate}
\item High accuracy.
\item Small memory requirement.
\item Real-time packets processing time.
\end{enumerate}
A high accuracy algorithm should detect out all super points and does not report normal hosts as super points by mistake. Many researchers try to use small and fast memory, such as static random accessing memory SRAM\cite{HSD:DetectionSuperpointsVectorBloomFilter}\cite{HSD:IdentifyHighCardinalitHostNetworkWideTrafficMeasurement}, to detect super point. These algorithms used estimating method to record hosts' cardinalities, the opposite hosts number during a time period. But the accuracy of these algorithms will decrease with the reduction of memory. Parallel computation ability of GPU (Graphic Processing Unit) is stronger than that of CPU because of its plenty operating cores. When using GPU to scan packets parallel, we would get a high throughput and that is what we do in this paper.

To overcome previous algorithms' weakness, we devise a novel distributed super points detection algorithm which has the highest accuracy but consumes smaller than one-fifth memory used by other algorithms. The contributions of this paper are listed following:
\begin{enumerate}
\item A tiny super points detection algorithm is proposed in this paper.  
\item A high accuracy super points filtering algorithm is proposed.  
\item We design the most memory efficient scheme for distributed super points detection.  
\item We extend our algorithm to sliding time window by adopting a more powerful counter. 
\item We implement our algorithm on GPU for real-time super point detection on core network traffic.  
\end{enumerate}

In the next section, we will introduce other super point detection algorithms and analyze their merit and weakness. In section 3, our novel memory efficient algorithm will be represented in detail. How to deploy our algorithm in GPU is described in section 4. In section \ref{sec-slidingWindow}, we describe how to run our algorithm under sliding time window. Section \ref{sec-experiments} shows experiments of our algorithm compared with other ones. And we make a conclusion in the last section. 

\section{Related work}
Super point detection is a hot topic in network research field. Shobha et al.\cite{HSD:streamingAlgorithmFastDetectionSuperspreaders} proposed an algorithm that did not keep the state of every host so this algorithm can scale very well. Cao et al.\cite{HSD:identifyHighCardinalityHosts} used a pair-based sampling method to eliminate the majority of low opposite number hosts and reserved more resource to estimate the opposite number of the resting hosts. Estan et al.\cite{HSD:bitmapCountingActiveFlowsHighSpeedLinks} proposed two bits map algorithms based on sampling flows. Several hosts could share a bit of this map to reduce memory consumption. All of these methods were based on sampling flows which limited their accuracy. 

In these previous algorithms, only a few were suitable for running in distributed environment\cite{HSD:ADataStreamingMethodMonitorHostConnectionDegreeHighSpeed}\cite{HSD:DetectionSuperpointsVectorBloomFilter}\cite{HSD:GPU:2014:AGrandSpreadEstimatorUsingGPU}.

Wang et al.\cite{HSD:ADataStreamingMethodMonitorHostConnectionDegreeHighSpeed} devised a novel structure, called double connection degree sketch (DCDS), to store and estimate different hosts cardinalities. They updated DCDS by setting several bits to one simply. In order to restore super points at the end of a time period, which bits to be updated were determined by Chinese Remainder Theory(CRT) when parsing a packet. By using CRT, every bit of DCDS could be shared by different hosts. But the computing process of CRT was very complex which limited the speed of this algorithm.

Liu et al.\cite{HSD:DetectionSuperpointsVectorBloomFilter} proposed a simple method to restore super hosts basing on bloom filter. They called this algorithm as Vector Bloom Filter(VBF). VBF used the bits extracted from IP address to decide which bits to be updated when scanning a packet. Compared with CRT, bit extraction only needed a small operation.  But VBF would consume much time to restore super point when the number of super points was very big because it used four bit arrays to record cardinalities.

Most of the previous works only focused on accelerating speed by adapting fast memory but neglected the calculation ability of processors. Seon-Ho et al.\cite{HSD:GPU:2014:AGrandSpreadEstimatorUsingGPU} first used GPU to estimate hosts opposite numbers. They devised a Collision-tolerant hash table to filter flows from origin traffic and used a bitmap data structure to record and estimate hosts' opposite numbers. But this method needed to store IP address of every flow while scanning traffic because they could not restore super points from the bitmap directly. Additional candidate IP address storing space increased the memory requirement of this algorithm.

To reduce transmission data in the distributed environment, we devise a novel super point opposite number estimator which can tell if a host is a super point with only 8 bits. Base on this memory efficient estimator, a smart super point restoring algorithm is devised. We will describe our novel algorithm in the following section.
\section{Super point detection} 
The Super point is a host which contacts with many others in a time period $T$. ``Other host" here has different means under different cases. When monitoring opposite IP at a host's network card, other host means every one that sends packets to or receives packets from this host. But this kind of opposite IPs could only be counted by each host self. Generally, a host is locating in a network managed by some Internet Service Providers (ISP). The managers of this subnetwork hope to get information about the traffic between their network and others. From the inspection of ISP, opposite host represents one that being watched at the edge of ISP's subnetwork. Edge of a network means a set of routers forwarding packets between this network and other networks. When monitoring traffic between different network, a router could be regarded as a watch point(WP). 

Let SNet represent the subnetwork managed by an ISP and ONet represent the set of other network communicating with SNet through its edge routers. When detecting the super point at the edge of SNet, the set of a host's opposite IP addresses is defined as below.
\begin{definition}[Opposite IP set/ Opposite IP number]
\label{def-oppositeIPset}
For a host $h$ in SNet or ONet, its Opposite IP set is the set of IP addresses communicating with it over a certain time period $T$ through the edge of SNet written as $OP(h)$. $h$'s opposite IP number is the number of elements in $OP(h)$ denoted as $|OP(h)|$.
\end{definition}

Then we can give the definition of the super point used in this paper.
\begin{definition}[Super point]
\label{def-superPoint}
In a certain time period T, if a host $h$ in $SNet$ or $ONet$ has no less than $\theta $ opposite IPs, $|OP(h)| \geq \theta$, $h$ is a super point.
\end{definition}

Super points may be located in $SNet$ or $ONet$. Both of these super points could be found out by the same algorithm with changing the order of IP addresses. In the rest of this paper, super point means SNet's super point briefly. Opposite IP number estimation is the foundation of super point detection. In this paper, we proposed two novel estimators: short estimator and long estimator.
\subsection{Short Estimator}
In order to judge if a host is a super point, we should record its opposite IP addresses while scanning packets sending to it or it receives. The estimation accuracy is related to the size of memory using to record opposite number. The bigger the size of allocating memory, the more accuracy the result will be. One of the most memory efficient algorithms is OPT proposed by Daniel et al. \cite{DC:AnOptimalAlgorithmDistinctElementProblem}. But the computing complex of OPT is very complex. In this paper, we proposed a more memory efficient algorithm derived from OPT to judge if the opposite IP's number is more than a threshold $\theta$. We call this method as Short Estimator(SE) because it uses only 8 bits, much shorter than other algorithms consumed.

Suppose that SE consists of $g$ bits. 8 is big enough for $g$ when host's IP address is version 4. Every bit of SE is initialized to 0 at the begin of a time period. When receiving a packet related to host $h$, $h$'s opposite IP address $oip$ in this packet will update a bit of SE if the least significant bit of $oip$'s randomly hashed value is bigger than an integer $\tau$. $oip$ is firstly hashed by a random hash function\cite{hash_UniversalClassesOfHashFunctions} $H_1$ to make sure that the hashed value is uniform distribution. $H_1$ will map an integer between 0 and $2^{32}-1$ to another random value in the same range.

If $lsb(H_1(oip)) \geq \tau$, one bit in SE, seleted by another hash function $H_2$, will be set to 1. $H_2$ map an integer between 0 and $2^{32}-1$ to a random value between 0 and $g-1$.

$\tau$ is an integer derived from $\theta$ by the following equation:
\begin{equation}\label{getlsbthreshold}
\tau=ceil(log_2(\theta/8))
\end{equation} 
Function ceil(x) returns the smallest integer no less than x. After scanning all $h$'s relevant packets, we can judge if $|OP(h)|$ is bigger than $\theta$ by counting the number of ``1" bits in SE. The number of ``1" bits in SE is also called the weight of SE, written as $|SE|$. If $|SE| \geq 3$, $|OP(h)|$ is judged as bigger than $\theta$.

We have introduced how to judge if a single host is a super point by SE. But there are millions of host in the network and it's not reasonable to allocate a SE for each of them because of the following two reasons:
\begin{enumerate}
\item Too many memory is required. A core network always contains millions of host. For an IP address of version 4, it will consume 4 bytes. Together with 8 bits used by an SE, each host requires 5 bytes. For a core network containing 100 millions of hosts, we will allocate more than 500 millions of bytes. Such big size of memory is a heavy burden for both memory allocation on server and transmission in the distributed environment.

\item The memory location is very difficult for huge hosts. IP addresses of hosts are widely distributed between 0 and $2^{32}-1$, especially for IP addresses of ONet. How to store and access these randomly hosts efficiently is a hard task. No matter storing these IP addresses in a list or hash table, we have to spend much time in memory accession.
\end{enumerate}
To overcome previous questions, we design an SE sharing structure which can use a fixed number of SE to judge and restore lots of hosts. In the next section, we will introduce super point restoring algorithm based on SE.
\subsection{Restoring super points by short estimator}
Without allocating an SE for each host, we won't know the IP address of super point. In order to detect super points at the end of a time period, we have to reconstruction IP address from our data structure. This requires that our data structure will contain enough IP address information when updating. Based on this requirement, we design a novel structure called Short Estimator Array, written as SEA, which can avoid keeping huge SE instances but can restore super point easily. 

From the name of the SEA, we can see that it is an array of SE. SEA has $SR$ rows and the $i$th row contains $SC(i)$ SEs. When receiving a packet, we extract an IP pair with format $<hip,oip>$ from it where $hip$ is the IP address of the host that we want to monitor and $oip$ is its opposite IP address. An SE in each row of the SEA will record $oip$. $hip$ decides which SE of each row is chosen. Because there are huge hosts in the network, using a single SEA would cause that its SEs is overshared by many hosts. ``Overshare" means there are too many hosts map to the same SE. So we use $2^r$ SEAs and each SEA record a part of traffic. We call these $2^r$ SEAs as SEA Vector (SEAV). Using which SEA to record and estimate a host's opposite IP number is decided by the rightest $r$ bits of the host. We call the right r bits of a host as Right Part (RP) and the rest left 32-r bits as the Left Part (LP). Figure \ref{SEAVpacket} shows how to choose SEA to record opposite IP number.
\begin{figure}[!ht]
\centering
\includegraphics[width=0.77\textwidth]{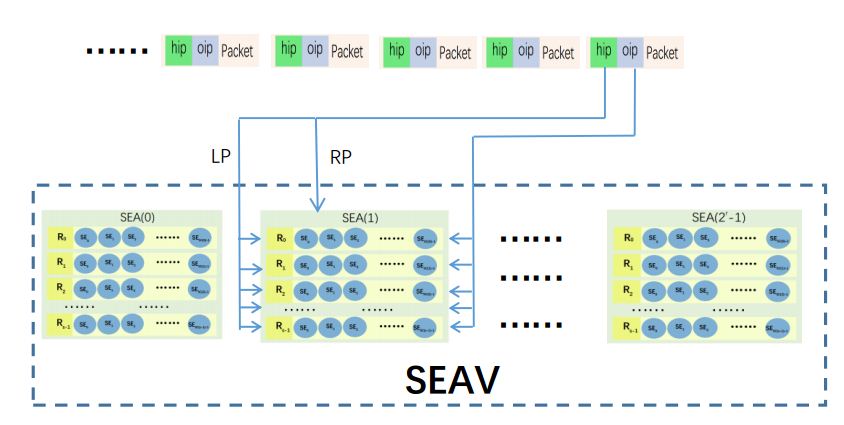}
\caption{Short estimarot arrays vector}
\label{SEAVpacket}
\end{figure}

For a certain SEA in SEAV, RP of a host is clear and only LP is unknown. When choosing a SE to update, we extract several successive bits from $hip$'s LP as the index of SE in each row. SE's index, written as $Idx(i,LP(h))$, means the $i$th row's column identifier of SE relating to a host $h$ whose LP is $LP(h)$. If every bit of LP is contained in one or more SEs' indexes, we can restore LP by extracting and concatenating bits in all these indexes. 
To explain how to get $Idx(i,lp)$ of each row, we give the following declaration:
\begin{definition}[index starting bit]
\label{def-IndexStartBit}
 For the $i$th row in a SEA, its index starting bit $ISB(i)$ is a bit in the LP from which we begin to extract sub bits of the LP.
\end{definition}

\begin{definition}[Index Bits Number]
\label{def-IndexBitsNumber}
 For the $i$th row in a SEA, its index bits number $IBN(i)$ is the number of bits that we would extract from the LP.
\end{definition}

$IBN(i)$ decides the number of $SE$, which is written as $SC(i)$, in the $i$th row. We can acquire $Idx(i,LP(h))$ by $IBN(i)$ and $ISB(i)$. Let $Idx(i,LP(h))[j]$ point to the $j$th bit of $Idx(i,LP(h))$ and $LP(h)[j]$ represent the $j$th bit of $LP(h)$. Every bit of $Idx(i,LP(h))$ could be determined by the following equation.
\begin{equation}\label{eqt_IdxLP}
Idx(i,LP(h))[j]=LP(h)[ (ISB[i]+j) mod (32-r)]
\end{equation} 
Where $0\leq j \leq IBN(i)-1$ and $0 \leq i \leq SR-1$.
The value of $IBN(i)$ and $ISB(i)$ should obey two constraints.
\begin{enumerate}
\item Every bit of $LP(h)$ locates in at least one index. This constraint makes sure that $LP(h)$ could be reconstructed from the $SR$ indexes. In another word, for $j \in [0,31-r]$, there is at least a $i \in [0, SR-1]$ that let $ISB[i] \leq j \leq ISB[i]+IBN[i]-1 $.
\item One index should have several bits same to its next index part. These duplicating bits could help us to remove fake candidate IP addresses efficiently. For $i \in [0, SR-1]$, $(ISB[i]+IBN[i]) mod (32-r) \leq ISB[(i+1)mod SR]$.
\end{enumerate}

The rows number $SR$ will affect $ISB(i)$ and $IBN(i)$. When $SR$ choose a bigger number, the first condition could be matched even all $IBN(i)$s are small. For example, we can set the value of $IBN(i)$ equal to $ceil(\frac{32}{SR}) + a$  and $ISB(i)=i*ceil(\frac{32}{SR})$, where $i\in [0,SR-1]$ and $a$ is a positive integer. When $a$ is fixed, $IBN(i)$ will decrease with the increasing of $SR$. Small $IBN(i)$ causes small memory consumption of SEA. Because each packet will be updated by $SR$ $SE$s, when $SR$ is very big, the updating time will increase too. Considering that the memory requirement of a single SE is very small, only one byte, we can set $SR$ to a small value such as 3 or 4.   

At the end of a time period, super points will be restored from SEVA. By equation \ref{eqt_IdxLP}, we can see that each bit of $LP(h)$ could be recovered from the set of $Idx(LP(h))=\{Idx(i,LP(h))| 0 \leq i \leq SR-1\}$ by a reverse equation as shown below.
\begin{equation}\label{eqt_LP_from_Idx}
LP(h)[j]=Idx(i,LP(h))[j-ISB(i)]
\end{equation} 
Where $0 \leq j \leq 31$, $0\leq i \leq SR-1$, $ISB(i)\leq j$ and $ j-ISB(i) \leq IBN(i)$.

But when update SEAV, we do not record $Idx(LP(h))$. We can derive that if a host $h$ is a super point, $SE(i, Idx(i,LP(h))$ will contain no less than 3 `1' bits. We call these SE whose weight is no less than 3 as Hot SE (HSE). Let $HSE(i)$ represent the set of HSE in the $i$th row. 

By picking $SR$ HSEs from every $HSE(i)$, we can get a candidate index set $CIdx=<c[0],c[1],\cdots,c[SR-1]>$ where $ SE(i, c(i))\in HSE(i)$ and $0\leq i\leq SR-1$. Supposing there are $|HSE(i)|$ elements in the $i$th row, there would be total $\prod_{i=0}^{SR-1}|HSE(i)|$ $CIdx$. The set of all $CIdx$ is denoted by $CIS$. If $h$ is a super point, $Idx(LP(h)) \in CIS$. By test each $CIdx$ in $CIS$ we can reconstruct all super points. 

In order to reduce this influence, we test if the union $SEU$, acquiring by the bit-wise ``AND" of all these $SR$ $SEs$, still contains more than $\theta$ opposite hosts. When its weight is no less than 3, we will return the restored LP as a super point's LP. Together with RP, the index of this SEAV, a super point will be restored. But some normal hosts, whose opposite number is littler than $\theta$ will stay in the result too. To reduce the number of these normal hosts, we apply a more precise estimation method, long estimator, together with the short estimator.
\subsection{Long estimator}
Long estimator are used to improve the accuracy of detection result. It uses more bits to estimate host's opposite number. Our long estimator is based on linear distinct counting algorithm (LDC) \cite{DC:aLinearTimeProbabilisticCountingDatabaseApp}. LDC is a bit vector of $k$ bits initialized with 0. When recording an opposite host $oip$, one bit in LDC, chosen by a random hash function $H_3(oip)$, will be set. LDC has the simplest updating process. The opposite hosts number could be estimated by the following equation:
\begin{equation}\label{eqt_LDC_estValueFromZeroBitsN}
 {Est}'=-k*ln(\frac{z_0}{k})
\end{equation}
$z_0$ is the resting zero number in LDC after recording all opposite hosts. LDC has a good accuracy performance in estimating opposite host number, but its memory consumption is very large. So it's too expensive to allocate an LDC for every host. Like SEA, we construct an array of LDC, written as $LDCA$, with $LR$ rows and $LC$ columns. Figure \ref{LDCAstruct} describes the structure of $LDCA$ and how to update it. 
\begin{figure}[!ht]
\centering
\includegraphics[width=0.77\textwidth]{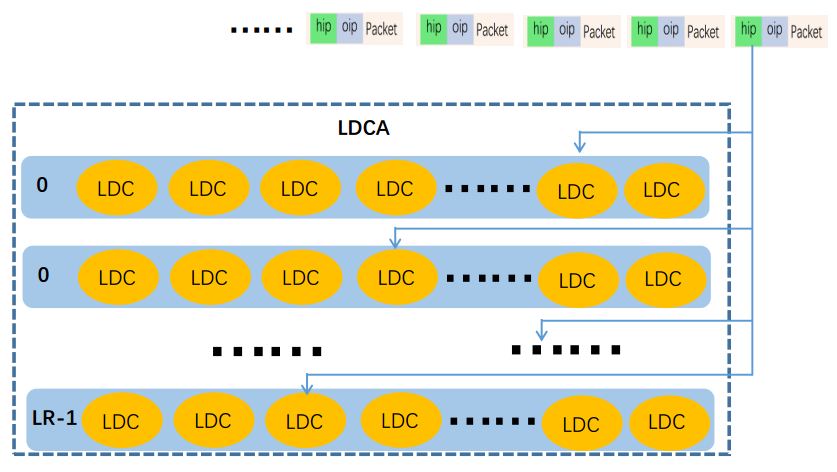}
\caption{Structure of LDCA}
\label{LDCAstruct}
\end{figure}
We use $LR$ random hash functions $LH_i$ to map a host to $LR$ LDCs in each row. No need to restore super points, the updating algorithm is much simpler than SEAV. For a given host at the end of a time window, its opposite hosts number is acquired from the union LDC related with it in each row by equation \ref{eqt_LDC_estValueFromZeroBitsN}. Two LDC is merged by bit-wise ``AND" operation. $LR$ has a great influence on the accuracy of LDCA. The next section we will describe how to set $LR$.

\subsection{Setting the number of rows of LDCA}
The row number of $LDCA$, $LR$, affects the accuracy of cardinality estimation. The larger the value of $LR$, the more SE will be updated when processing each IP pair. In addition to increasing computing time, a large $LR$ may also reduce the accuracy of the algorithm.

When the memory occupied by $LDCA$ is constant, how to set the $LR$ and $LC$ is the most reasonable? This section discusses this issue.

In order to facilitate the discussion, the symbols and parameters are given first. V denotes the number of $LDC$ in $LDCA$, and $V = LR*LC$; $N$ denotes the number of different IP pairs in a time window. For a stable network, the fluctuation range of $N$ in a period should be relatively stable. This value can be obtained from statistical observations of past traffic and can be reset when the observed changes exceed the acceptable range. Therefore, in the following discussion, $N$ is assumed to be a constant.

$Psu$, the probability that a bit in the $LDCA$ is set to ‘1’, is used as the analysis measure. This is because setting $LR$ to a reasonable value can reduce $Psu$ and improve the accuracy of cardinality estimation. In the union $LDC$, $ULDC$, generated by the merging of $LR$ $LDC$, $Psu$ varies with the change of $N$ and $LR$, so $N$ is an unavoidable parameter to discuss this problem.
\begin{lemma}
\label{la-psu_of_different_rowN}
When there are $N$ different IP pairs in a time window, the probability that a bit in $ULDC$ is ‘1’ is $Psu=(1-(1-\frac{1}{k})^{\frac{N}{LC}})^{LR}$
\end{lemma}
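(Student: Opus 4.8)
The plan is to compute $Psu$ by a two-stage argument that mirrors the two-stage hashing used to update $LDCA$: first the column hash that routes an IP pair to one of the $LC$ columns in each row, then the bit hash $H_3$ that picks one of the $k$ bits inside the selected $LDC$. Fix a host $h$ and a bit position $b \in \{0,\dots,k-1\}$, and recall that bit $b$ of $ULDC$ equals the logical ``AND'' over the $LR$ rows of bit $b$ of $LDC(i,LH_i(h))$. So first I would reduce the claim to computing, for a single row $i$, the probability $p_1$ that bit $b$ of $LDC(i,LH_i(h))$ is $1$, and then argue that the $LR$ row-events are (treated as) independent, so that $Psu = p_1^{LR}$.

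For $p_1$: among the $N$ distinct IP pairs seen in the window, those that affect $LDC(i,LH_i(h))$ are exactly the ones whose host component is mapped by $LH_i$ into column $LH_i(h)$; under the uniform-hashing model each pair lands there independently with probability $\frac{1}{LC}$, so this $LDC$ receives $\frac{N}{LC}$ distinct pairs in expectation, and the analysis treats the load as exactly $\frac{N}{LC}$. Each such pair sets the bit $H_3(oip)$, uniform on $\{0,\dots,k-1\}$, so bit $b$ stays $0$ with probability $1-\frac{1}{k}$ per pair and $(1-\frac{1}{k})^{\frac{N}{LC}}$ overall; hence $p_1 = 1-(1-\frac{1}{k})^{\frac{N}{LC}}$, which is precisely the single-$LDC$ occupancy estimate underlying equation \ref{eqt_LDC_estValueFromZeroBitsN}. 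Raising to the $LR$-th power by independence of the rows yields $Psu=(1-(1-\frac{1}{k})^{\frac{N}{LC}})^{LR}$, as claimed.

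The part that needs the most care is the collection of independence and idealization assumptions, not any computation. Three points should be made explicit at the start of the proof: (i) the column loads are themselves random, so replacing the load of a fixed $LDC$ by its mean $\frac{N}{LC}$ is an approximation, reasonable when $N \gg LC$; (ii) within one $LDC$, two distinct pairs sharing the same $oip$ set the same bit, so the number of genuinely distinct bit-setting operations is the number of distinct opposite IPs routed there rather than $\frac{N}{LC}$ — the lemma silently identifies the two, which is the standard $LDC$ modelling convention; and (iii) the $LR$ events ``bit $b$ of $LDC(i,LH_i(h))$ is $1$'' are driven by the independent hash functions $LH_i$ but act on the same pair set, so their independence is again the usual random-oracle idealization. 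Once these are granted, the derivation above goes through directly.
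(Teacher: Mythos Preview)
Your argument matches the paper's own proof exactly: compute the per-row probability that a fixed bit is set as $1-(1-\frac{1}{k})^{N/LC}$ via the uniform-load approximation $n_1=N/LC$, then raise to the $LR$-th power because a bit in $ULDC$ is $1$ iff the corresponding bit is $1$ in all $LR$ rows. Your explicit discussion of the idealization assumptions (mean load replacing random load, distinct-$oip$ versus distinct-pair counting, row independence) goes beyond what the paper spells out, but the underlying derivation is identical.
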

\begin{proof}
Let n1 be the number of IP pairs corresponding to a $LDC$. In $LDCA$, each row is updated by $N$ IP pairs. When IP pairs are mapped uniformly to different LDC by hash functions, $n1 = \frac{N}{LC}$. In a $LDC$, the probability that a bit is keep 0 is $(1-\frac{1}{k})^{n1}$. In the merged $ULDC$, a bit is ‘1’ if and only if all the corresponding bit in the $LR$ rows are set to ‘1’, so $Psu=(1-(1-\frac{1}{k})^{n1})^{LR}=(1-(1-\frac{1}{k})^{\frac{N}{LC}})^{LR}$.
\end{proof}
\begin{theorem}
\label{th-different_row_union_le}
If there are a total of V $LDC$ in $LDCA$, i.e. $LR*LC=V$, and $N$ different IP pairs in a time window, then $Psu$ gets the minimum when $LR=\frac{-V*ln(2)}{N*ln(1-\frac{1}{k})}$.
\end{theorem}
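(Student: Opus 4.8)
The plan is to remove the constraint by substitution and then recognise the resulting one–dimensional problem as the classical Bloom–filter false–positive optimisation. First I would use $LC = V/LR$ to rewrite the expression of Lemma~\ref{la-psu_of_different_rowN} as a function of a single variable:
\[
Psu(LR)=\left(1-\left(1-\frac{1}{k}\right)^{\frac{N\cdot LR}{V}}\right)^{LR}=\left(1-e^{-\beta\, LR}\right)^{LR},
\qquad \beta:=\frac{-N\ln\!\left(1-\frac{1}{k}\right)}{V}>0,
\]
where $\beta>0$ because $0<1-\frac{1}{k}<1$. Since $LR$ ranges over the positive reals, it suffices to minimise $Psu(LR)$ on $(0,\infty)$.

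Next I would substitute $p:=e^{-\beta\, LR}\in(0,1)$, so that $LR=-\ln p/\beta$ and
\[
\ln Psu = LR\cdot\ln(1-p)= -\frac{1}{\beta}\,\ln p\,\ln(1-p).
\]
Because $\beta>0$ and $\ln p\,\ln(1-p)>0$ throughout $(0,1)$, minimising $Psu$ is equivalent to maximising $g(p):=\ln p\,\ln(1-p)$ over $(0,1)$. The function $g$ is strictly positive on $(0,1)$, tends to $0$ as $p\to0^+$ and as $p\to1^-$, and satisfies $g(p)=g(1-p)$; its derivative $g'(p)=\frac{\ln(1-p)}{p}-\frac{\ln p}{1-p}$ vanishes at $p=\frac{1}{2}$, which is its only stationary point in $(0,1)$, so $g$ attains its maximum there. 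Undoing the substitution, the optimal $LR$ satisfies $e^{-\beta\, LR}=\frac{1}{2}$, i.e. $\beta\cdot LR=\ln 2$, giving
\[
LR=\frac{\ln 2}{\beta}=\frac{-V\ln 2}{N\ln\!\left(1-\frac{1}{k}\right)},
\]
which is the asserted value.

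The only step that requires genuine care is the claim that the single stationary point $p=\frac{1}{2}$ is a global maximum of $g$ rather than merely a critical point: this follows from the boundary behaviour $g\to0$ at both ends together with $g>0$ in the interior, or equivalently from uniqueness of the solution of $g'(p)=0$ in $(0,1)$ (one checks that $(1-p)\ln(1-p)=p\ln p$ forces $p=\frac{1}{2}$), or simply from $g''(1/2)<0$. Everything else is routine: the elimination of $LC$ via $LR\cdot LC=V$, the change of variable $p=e^{-\beta LR}$, and the sign bookkeeping that converts ``minimise $Psu$'' into ``maximise $g$''. I also note that this optimisation treats $LR$ as a real number, so in practice one rounds to the nearest feasible integer row count; this is left implicit in the statement.
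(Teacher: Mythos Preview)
Your argument is correct and follows essentially the same route as the paper: eliminate $LC$ via $LR\cdot LC=V$, reduce to a one-variable optimisation, and solve the first-order condition, which in both versions boils down to $(1-q)\ln(1-q)=q\ln q$ with $q=(1-1/k)^{N\cdot LR/V}$ and hence $q=1/2$. Your presentation is in fact tidier---working with $\ln Psu$ and the substitution $p=e^{-\beta LR}$ makes the symmetry explicit---and you supply the global-minimum justification (boundary behaviour and uniqueness of the stationary point) that the paper's proof omits.
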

\begin{proof}
When $V$ and $N$ are constant，$Psu=(1-(1-\frac{1}{k})^{\frac{LR*N}{V}})^{LR}$. Let $L=\frac{N}{V}$, $G=1-\frac{1}{k}$. The derivative of $Psu$ is $\frac{\mathrm{d}_{Psu} }{\mathrm{d}_{LR}}=(1-G^{LR*L})^{LR}(ln(1-G^{LR*L})-LR*L*G^{LR*L}*ln(G)*(1-G^{LR*L})^{-1})$ 
When $\frac{\mathrm{d}_{Psu} }{\mathrm{d}_{LR}}=0$, $Psu$ reaches its minimum. Since $1-G^{LR*L}>0$, when $\frac{\mathrm{d}_{Psu} }{\mathrm{d}_{LR}}=0$,$ln(1-G^{LR*L})-LR*L*G^{LR*L}*ln(G)*(1-G^{LR*L})^{-1}=0$, and we have the following equation：
\begin{equation}
\label{eq-thr-differRow_410}
ln(1-G^{LR*L})=LR*L*G^{LR*L}*ln(G)*(1-G^{LR*L})^{-1}
\end{equation}
Let $x=LR*L$. Equation \ref{eq-thr-differRow_410} is rewritten as follows:
\begin{equation}
\label{eq-thr-differRow_411}
(1-G^{x})ln(1-G^{x})=G^{x}*ln(G^{x})
\end{equation}
 
When $1-G^x=G^x$, equation \ref{eq-thr-differRow_411} established. Here $G^x=\frac{1}{2}$. Replacing $x$, $G$ and $L$ in equation \ref{eq-thr-differRow_410}, we will get $LR=\frac{-V*ln(2)}{N*ln(1-\frac{1}{k})}$. 
\end{proof}
\textsc{•}For $SLDA$, setting $LR$ reasonably according to the number of IP pairs can improve the accuracy. The $LR$ given in Theorem \ref{th-different_row_union_le} can be used as the upper limit of the number of rows. When $LR$ does not exceed $\frac{-V*ln(2)}{N*ln(1-\frac{1}{k})}$, increasing $LR$ decreases $Psu$. But the larger the $LR$, the more bits need to be merged when generating $ULDC$. This will increase the time consumed by our algorithm. Therefore, the accuracy and calculation time of the algorithm should be taken into account when selecting $LR$ in practice. From Theorem \ref{th-different_row_union_le}, we can see that the larger $V$, the smaller $Psu$, but the larger $LR$. In the actual determination of $LR$, as long as $Psu*k$ is less than 1, that is, the number of noise in the combined $ULDC$ is less than 1, the accuracy of $LDCA$ is as high as $LE$.

LDCA improves the accuracy of detection result. Both $SEAV$ and $LDCA$ can be updated parallel without any data accessing conflict which ensures the success deploying on GPU. In the next section we will introduce how to detect super points in parallel and distributed environment.
\section{Distributed super points detection on GPU}
In a high speed network, such as 40 Gb/s, there are millions of packets passing through the edge of the network. To scan so many packets in real time requires plenty computing resource. Graphic processing unit (GPU) is one of the most popular parallel computing platform in recent years. For these tasks that have no data accessing conflict and processing different data with the same instructions (SIMD), GPU can acquire a high speed up\cite{PD2013:BenchmarkingOfCommunicationTechniquesForGPUs}\cite{PD2013:GeneratingDataTransfersForDistributedGPUParallelPrograms}. Every packet will update SEAV and LDCA. Both these processes just set several bits and every bit could be set by different threads concurrently without introducing any mistakes. Our algorithm has great potential of scanning packets parallel on GPU. When detecting super points on different $WP$s, there are three critical stages: scanning packets on every WP; merging $SEAV$ and $LDCA$ of every WP into global ones; restoring super points from global $SEAV$ and $LDCA$.
\subsection{Packets scanning}
Edge routers transmit packets between SNet and ONet. IP addresses of every packet could be acquired directly at these edge routers. But GPU has its own memory and it can only access its graphic memory directly. So we have to copy IP addresses from WP to GPU's global memory. It's not efficient to copy these IP addresses one by one because the copying processing requires additional starting and ending operations. In order to reducing copying time, we allocate two buffers on WP and GPU separately. When the buffer on WP is full, we will copy it to buffer on GPU and clear it for storing other IP addresses.

GPU has hundreds of cores and can launch thousands of threads to coping with different data parallel. After receiving IP addresses buffer from WP, we will start plenty threads to cope with these IP pairs parallel.

Each WP will only cope with IP pairs of packets passing through it. But a host's opposite hosts may send to different $WP$s. It's not possible to get the accuracy opposite number from one WP. So we should gather all $SEAV$ and $LDCA$ from distributed $WP$s together for global super points restoring.
\subsection{Data merging and super points restoring}
At the end of a time period, we will merge all $SEAV$ and $LDCA$ together. In order to relieve the pressure of WP, we set another server as the global server (GS) to restore super points. All WP will send their $SEAV$ and $LDCA$ to GS. $SEAV$ and $LDCA$ in different $WP$s have the same size and all of them are very small. So communication delay between $WP$s and $GS$ will not cause congestion. 

On GS, we acquired the global $SEAV$ and $LDCA$. Super points will be acquired from global $SEAV$ and opposite hosts number of super points could be calculated from global $LDCA$.

Our algorithm requires small memory and has simple operation, no floating operation. A cheap GPU can acquire a high speed up as showed in our experiment.

\section{Work under sliding time window} \label{sec-slidingWindow}
The previous discussion only focus on super point detection and cardinality estimation under discrete time window. But the result under discrete time window will be affected by the time boundary. An efficient way to solve this problem is to replace discrete time window with sliding time window.
\subsection{Sliding time window \& Discrete time window}
Super point's cardinality estimation under discrete time window is simple because it doesn't need to maintain hosts' state in the previous time slices. But the estimating result has the following two problems:
\begin{enumerate}
\item The result is affected by the starting of a discrete time window, and it fails to detect and estimate the super point which spans the boundary of two adjacent windows.
\item The result is reported with high latency. 
\end{enumerate}
This two weakness of discrete time window comes from its moving step. The moving step of discrete time window equals its size. The bigger monitor period, the higher latency and more errors. Sliding time window solves these two problems together because its moving step has no relation to its window size. But super point cardinality estimation under sliding time window is more complex than that under discrete time window because it maintains hosts state of previous time and estimates super point's cardinality more frequently. 
 Discrete time window and sliding time window are two kinds of the period for cardinality estimating as shown in figure \ref{SlidingDiscreting_time_window}. 
 \begin{figure}[!ht]
\centering
\includegraphics[width=0.47\textwidth]{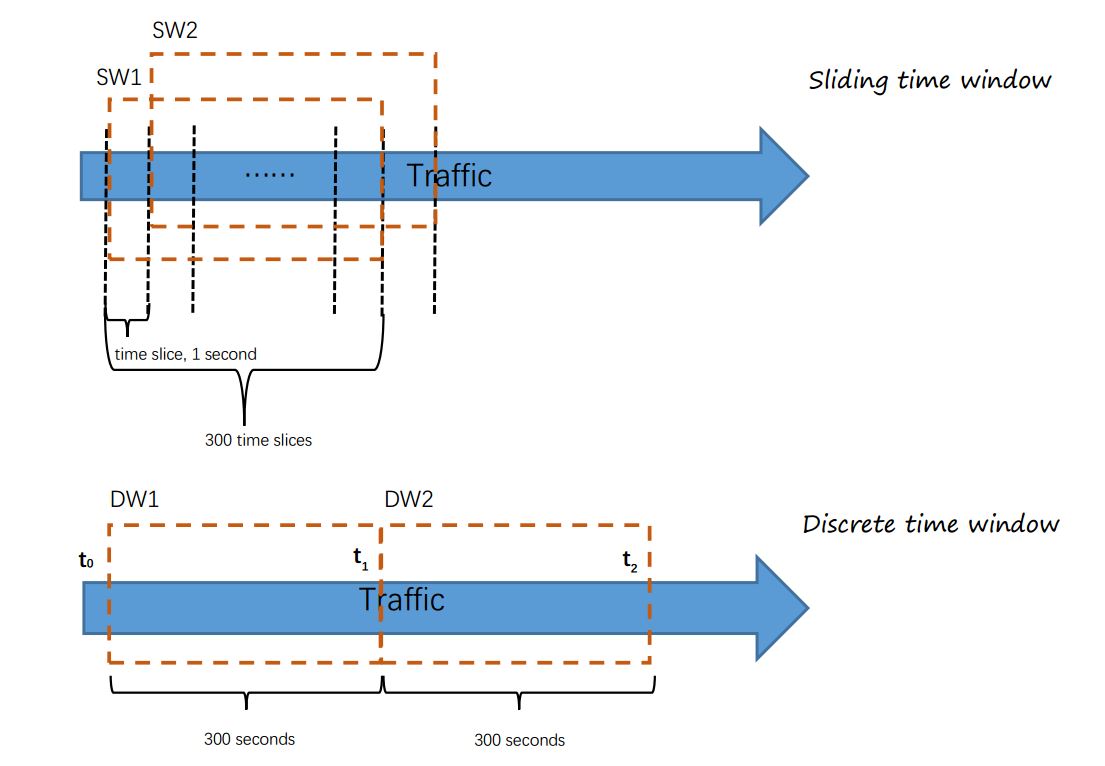}
\caption{Sliding time window and discrete time window}
\label{SlidingDiscreting_time_window}
\end{figure}

Traffic between network SNet and ONet could be divided into successive time slices which have the same duration. The length of a time slice could be 1 second, 1 minute or any period in different situations. A sliding time window, denoted as $W(t, k)$, contains k successive time slices starting from time point t as shown in the top part of figure \ref{SlidingDiscreting_time_window}. Sliding time window will move forward one time slice a time. So two adjacent sliding time windows contain k-1 same slices. When k is set to 1, there is no duplicate period between two adjacent windows, which is the case of the discrete time window in the bottom part of figure \ref{SlidingDiscreting_time_window}. In figure \ref{SlidingDiscreting_time_window}, the size of the time slice is set to 1 second for sliding time window and 300 seconds for the discrete time window. A sliding window in figure \ref{SlidingDiscreting_time_window} contains 300 time slices. In figure \ref{SlidingDiscreting_time_window}, the size of a sliding time window is equal to that of a discrete time window. 

Cardinality estimation under discrete time window is easy because it doesn't need to maintain the appearance of opposite hosts in another time window. But the result is affected by the starting of the discrete time window. When a super point has different opposite hosts in two adjacent time windows, it may be neglected under discrete time window. 

For example, suppose that $DW1$ starts from time point $t_0$ to time point $t_2$ and $DW2$ starts from time point $t_2$ to time point $t_4$ in figure \ref{SlidingDiscreting_time_window}. Let $t_1$ and $t_3$ be two time points in $DW1$ and $DW2$ separately and $TW(t_1, t_3)=300$ seconds. If $|OP(aip, t_1, t_2)|=512$ and $|OP(aip, t_2, t_3)|=512$, $aip$ is a super point in $TW(t_1, t_3)$. But $aip$ will never be detected out in $DW1$ nor $DW2$. By surveying a real-world 40Gb/s network, we found that discrete time window will lose average 14 such super points. We call the super point detected under the sliding window a sliding super point.

\subsection{Sliding super point detection}
In discrete time window, single bit is big enough to store if a host has appeared in a certain period. But under the sliding time window, a counter must store previous state of previous time slices. Hence, a more powerful counter is required. There are many excellent counters used for sliding cardinality estimation, such as time stamp\cite{SDC:MaintainingStreamStatisticsOverSlidingWindows}, Distance Recorder (\textit{DR}) \cite{ISPA2017:HighSpeedNetworkSuperPointsDetectionBasedSlidingWindowGPU} and Asynchronous Timestamp (AT)\cite{VATE2018:sw}. AT has the merits of small memory consumption and few state maintain time at the same time. This paper adopts AT to make SEAV and LDCA run under sliding window.

Under the sliding window, each bit in SEAV or LDCA is replaced by an AT. We call the discrete time window version SLGA and the sliding version of super point detection algorithm SSLGA. SLAG needs to initialize SEAV and LDCA at the beginning of each time window as described in the previous section. SSLGA only needs to initialize once at the beginning of the algorithm. Then SSLGA updates AT incrementally. 

Because the number of bits in SEAV and LDCA are constant while running, the number of AT in SSLGA are constant too while running. A pool containing fixed number of AT is allocated at the beginning of SSLGA. Each AT in SSLGA corresponds to an AT in the AT pool. Instead of reinitialization at the beginning of each time window, SSLGA maintains the states of its AT at the end of each time slice. The state maintaining of SSLGA is to maintain the state of AT in the pool. The AT in the pool is divided according to the method in paper \cite{VATE2018:sw}. The maintaining time of AT is small. Hence SSLGA has a fast speed. SSLGA detects super points like SLGA. The inactive AT corresponds to ``0" bit and the active AT corresponds to ``1" bit. AT has the ability to run parallel. Hence SSLGA could be deployed on GPU to running in real time.

\section{Experiment and analysis} \label{sec-experiments}
In order to evaluate the performance of our algorithm, we use six real world core network traffic to compare the accuracy and consumption time of our algorithm with others. These traffics could be downloaded from Caida\cite{expdata:Caida}. Caida's OC192 data monitors an hour-long trace starting from 13:00. Table \ref{tbl-trafficInf} shows the detail information of every experiment traffic.

\begin{table*}
\centering
\caption{Traffic information}
\label{tbl-trafficInf}
\begin{tabular}{c}                                                                                                                                                                                                                           
% \begin{figure}[!ht]
\centering
\includegraphics[width=0.95\textwidth]{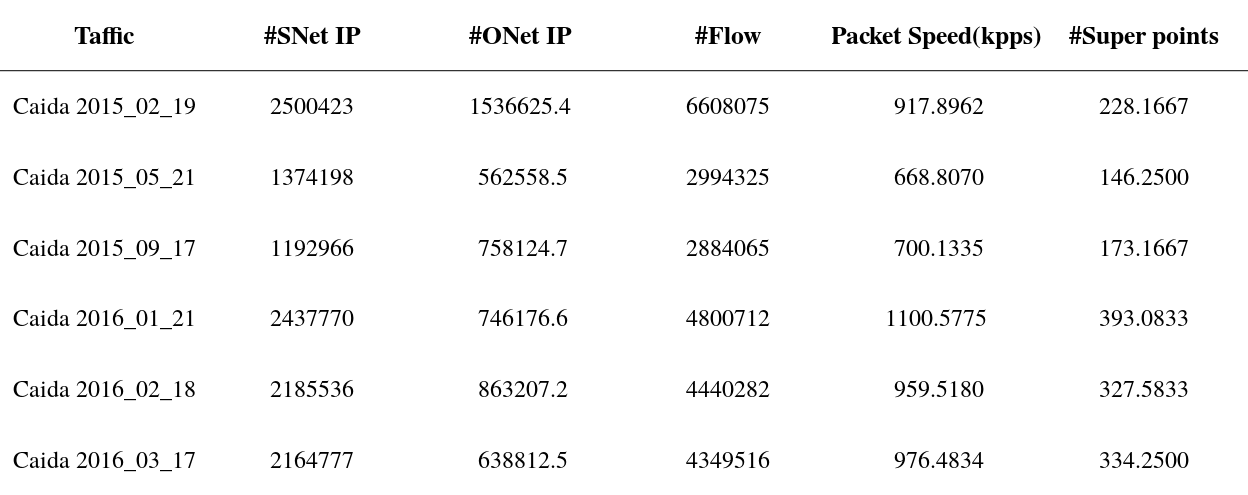}
%\end{figure}
\end{tabular}
\end{table*}

We use a common and low cost GPU, Nvidia GTX 650, to run every algorithm. There are total 1 GB of graphic memory in this GPU card and it communicates with computer through PCIe3.0 which has a bandwidth as high as 16 Gb/s. We compare the performance of different algorithms: DCDS\cite{HSD:ADataStreamingMethodMonitorHostConnectionDegreeHighSpeed}, VBFA \cite{HSD:DetectionSuperpointsVectorBloomFilter}, GSE \cite{HSD:GPU:2014:AGrandSpreadEstimatorUsingGPU} and SLGA. SLGA is the one proposed in this paper.
\subsection{Accuracy and memory}
False positive rate (FPR) and false negative rate (FNR) are two important criteria of detection accuracy. FPR means the ratio of the number of detected fake host to the number of super points. FNR is the ration of the number of these super points that failed be detected by an algorithm to the number of super points. For an estimating algorithm, we hope that its FPR and FNR are small at the same time because FPR shows a negative correlation with FNR. Figure \ref{Rlt_FPR} and \ref{Rlt_FNR} illustrate the FPR and FNR of different algorithms. In our algorithm, the time window is set to 300 seconds and each one-hour traffic is split into 12 sub traffics according to the time window.

\begin{figure*}[!ht]
\centering
\includegraphics[width=0.97\textwidth]{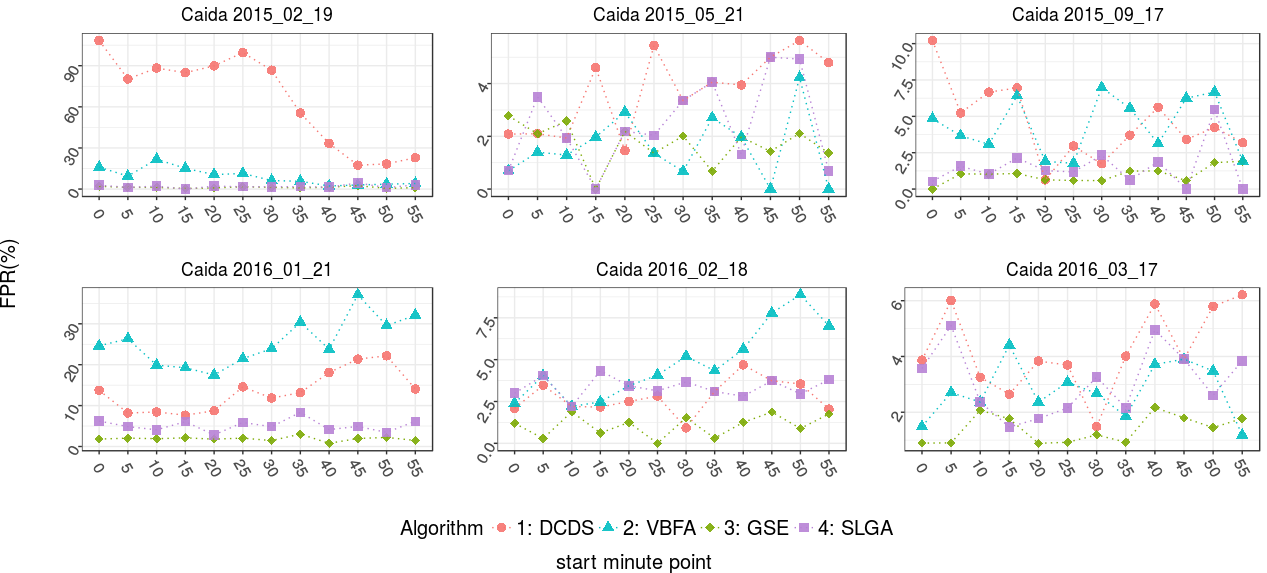}
\caption{FPR of different algorithms}
\label{Rlt_FPR}
\end{figure*}

\begin{figure*}[!ht]
\centering
\includegraphics[width=0.97\textwidth]{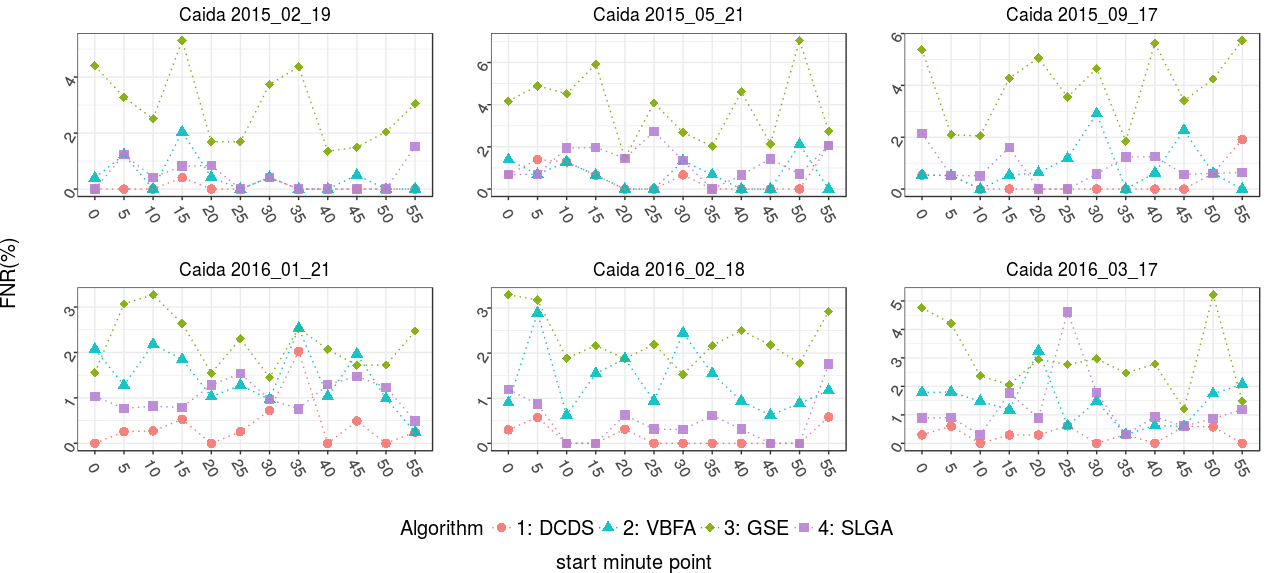}
\caption{FNR of different algorithms}
\label{Rlt_FNR}
\end{figure*}

GSE has a smaller FPR than other algorithms have. SLGA's FPR is a little higher than that of GSE but much smaller than that of DCDS and VBFA.

Although GSE has a small FPR, its FNR is higher than other algorithms'. A high FNR will let GSE fail to detect some important super points. DCDS's FNR is the lowest at the cost of its high FPR. SLGA's FNR is between DCDS's and VBFA's. In order to have an overall detection accuracy, we use the sum of FPR and FNR, called false total rate (FTR), as the accuracy criterion. SLGA's FTR is the smallest in all of these algorithm. Table \ref{Table_alg_avgRlt} shows the memory consumption and average result of different algorithms.

\begin{table*}
\centering
\caption{Average result of different algorithms}
\label{Table_alg_avgRlt}
\begin{tabular}{c}                                                                                                                                                                                                                           
% \begin{figure}[!ht]
\centering
\includegraphics[width=0.95\textwidth]{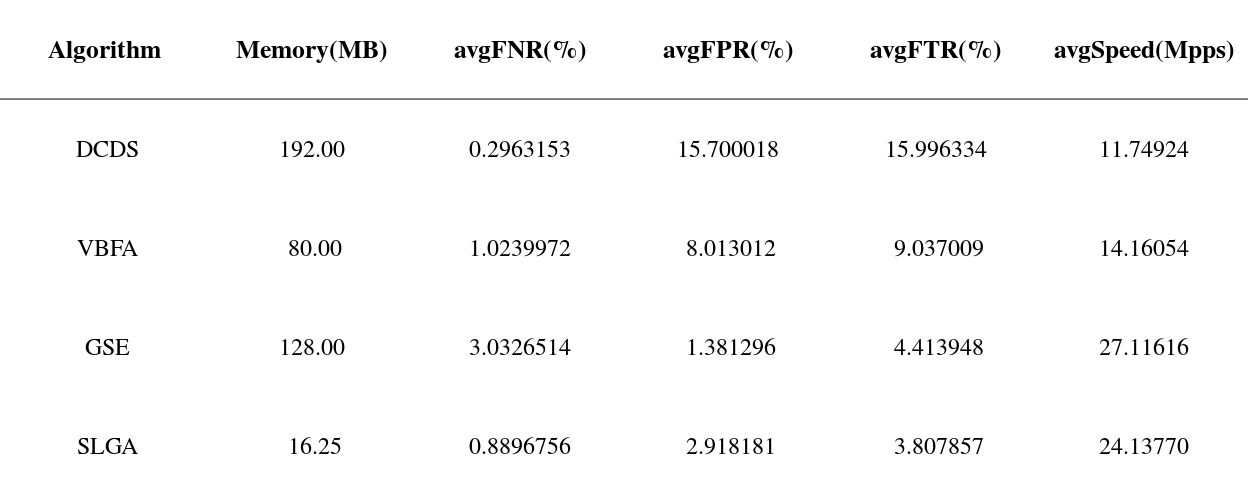}
%\end{figure}
\end{tabular}
\end{table*}
Both DCDS and GSE use more than 100 MB memory and DCDS uses nearly 200 MB memory. VBFA uses smaller memory than DCDS and GSE. But SLGA consumes the smallest memory in all of these algorithm, only one-fifth memory that VBFA uses. Small memory requirement let SLGA has a small communication latency in a distributed environment than other algorithms. 
False rates listing in table \ref{Table_alg_avgRlt} are acquired by calculating the average value of an algorithm at all these time windows. From table \ref{Table_alg_avgRlt} we can see that SLGA not only has the smallest memory requirement, but also the smallest overall false rate. Its processing speed is fast enough for deal with this traffic in real time.
\subsection{Time consumption}
When running in our cheap GPU, all algorithm can detect super points in real time for every 5-minutes sub traffic. But their consuming time are very different as shown in figure \ref{Rlt_TotalUT}.
\begin{figure*}[!ht]
\centering
\includegraphics[width=0.97\textwidth]{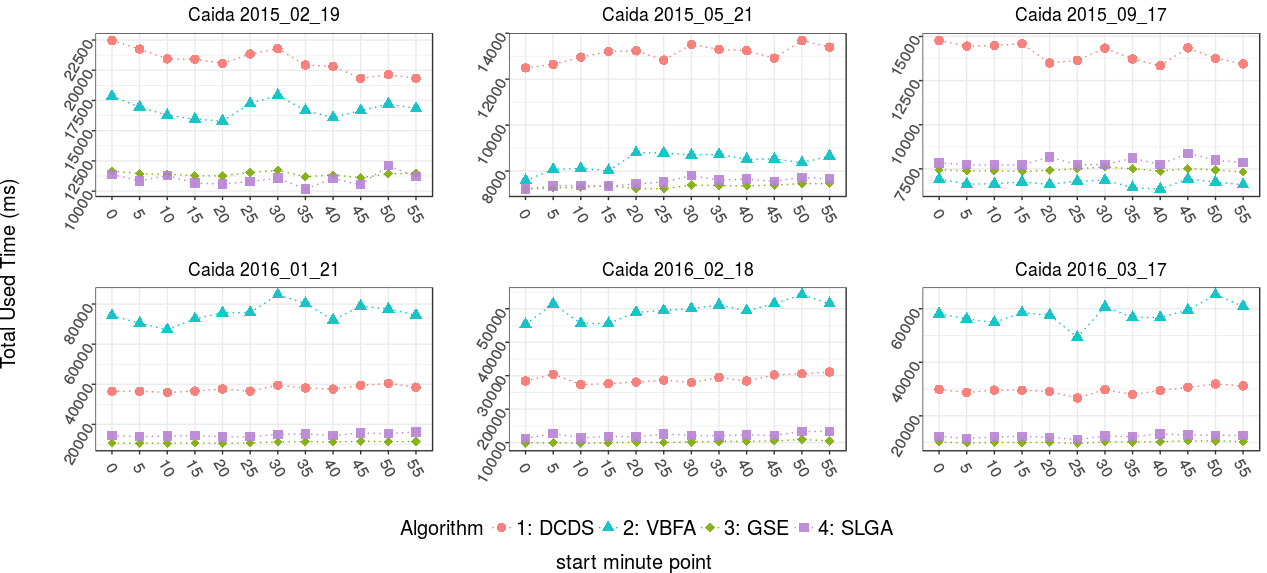}
\caption{Total used time of different algorithms for every sub traffic}
\label{Rlt_TotalUT}
\end{figure*}

DCDS uses much more time than GSE and SLGA because it employs CRT to restore super points which requires more complex operation than a hash function does. In the first three traffics, VBFA uses littler time than DCDS does. This is because VBFA can locate column index by extracting sub bits of hosts's IP addresses. But when reconstructing super points, VBFA will generate huge candidate IP addresses. The number of candidate IP addresses increases sharply with the number of super points. In the last three traffic, which all contain more than 300 super points, VBFA uses much more time than all the other algorithms.

SLGA uses a little more time than GSE. When scanning traffic, GSE only needs to set one bit while SLGA will set several bits in $SEA$ and $LDCA$. But GPU can hide the memory accessing delay by launching plenty threads parallel. So SLGA is slower than GSE a little. In this paper, we divide the total packets number in a time window by the processing time to get the speed. The unit of algorithm's speed is million packets per seconds, written as Mpps. Speeds of different algorithms listed in table \ref{Table_alg_avgRlt} is the mean value of different algorithms's speeds of all 5-minutes sub traffics.

SLGA and GSE have faster speed than VBFA and DCDS. The experiment traffics have an average bandwidth of 4.5 Gb/s \cite{expdata:Caida}. Suppose that 900 MB memory of GPU are available for different algorithms' kernel structures and the rest 100 MB memory are used for IP addresses buffer and other running parameters. For a higher bandwidth traffic, more memory will be required. Suppose that the memory requirement grows linearly with the bandwidth which could be realized by splitting IP addresses by their right bits. From the memory perspective, our algorithm can deal with network traffic with bandwidth as high as 249 Gb/s, while the highest traffic bandwidth for DCDS, VBFA and GSE are 21.1 Gb/s, 50.6 Gb/s and 31.6 Gb/s. Note that the GPU used in our experiment is a cheap one which costs only 30 dollars. A more advanced GPU, such as GTX 1080 with 11 GB graphic memory, could be brought with 1000 dollars to deal with a faster and bigger network.
%When equipping with this advanced GPU, our algorithm can scan traffic with bandwidth as high as 2000 Gb/s in real time. 

\section{Conclusion} \label{sec-conclusion}
In this paper we introduce a memory efficient distributed super points detection algorithm. Super point plays important roles in network management and security. How to find them out in real time is the foundation of super points application. Unlike other algorithms, we use two kinds of opposite number estimation algorithms in our scheme: short estimation and long estimation. SE consumes very small memory and has a fast processing speed. Based on SE, we design a novel super point restoring structure SEAV. From SEAV we can get a candidate super points lists. In order to improve the accuracy of the detection result, we introduce LE. LE consumes more memory than SE but has a higher accuracy. Using SE and LE together makes our algorithm get the highest accuracy with the smallest memory.

\bibliographystyle{splncs04}

\bibliography{..//..//ref} %replace sigproc with your own bib file

\end{document}